\newtheorem{definition}{Definition}
\newtheorem{remark}{Remark}
\newtheorem{theorem}{Theorem}
\newtheorem{result}{Result}
\newtheorem{proposition}{Proposition}
\newenvironment{proof}[1][Proof]{\noindent\textbf{#1.} }{\ \rule{0.5em}{0.5em}}
\newtheorem{example}{Example}
\newcommand{\eeq}{\end{equation}}
\newcommand{\beq}{\begin{equation}}
\newcommand{\nuq}[1]{\label{#1} \eeq}
\begin{document}
\title{Matching of observations of dynamical systems, with applications to sequence matching}
\author{Th\'eophile Caby,\\
CMUP, Departamento de Matemática, Faculdade de Ciências, Universidade do Porto,\\
Rua do Campo Alegre s/n, 4169007 Porto, Portugal.\\
 caby.theo@gmail.com}
\date{}

\maketitle
\begin{abstract}
We study the statistical distribution of the closest encounter between generic smooth observations computed along
different trajectories of a rapidly mixing dynamical system. At the limit of large trajectories, we obtain a distribution of Gumbel type that depends on both the
length of the trajectories and on the Generalized Dimensions of the image measure. It is also modulated
by an Extremal Index, that informs on the tendency of nearby observations to diverge along with the evolution of the dynamics. We give a formula
for this quantity for a class of chaotic maps of the interval and regular observations. We present diverse
numerical applications illustrating the theory and discuss the implications of these results for the study
of physical systems. Finally, we discuss the connection between this problem and the problem of the longest matching block common to different sequences of symbols. In particular, we obtain a distributional result for strongly mixing processes.
\end{abstract}

\section{Introduction}

Certain real-world systems, such as climate, take place in high-dimensional spaces and exhibit chaotic and multi-scaled properties. To study such complex dynamics, physicists often have access to only a limited number of observable quantities through the measurement process. The latter can be modeled by computing an observation function along a typical trajectory of the system. Understanding the geometric and statistical properties of such observations, and their relationship to the properties of the original underlying system is a problem of great interest in physics, that has been instigated only recently. The study of the recurrence properties of observations have been initiated by Rousseau and Saussol in \cite{roussol,jerobs}, in which asymptotic and distributional results were obtained for both hitting times and return times of observations. In a recent paper \cite{obsrec}, this problem was studied from the point of view of Extreme Value Theory (EVT). This approach turned out to provide information on the local geometry of the image measure, which, for a good choice of observation, can characterize the local fractal structure of the original underlying attractor. In this paper, we pursue the statistical analysis of observed dynamical systems by studying the statistics of the shortest distance between several observed trajectories. Closely related problems have gained interest in recent years. The case of real, unobserved trajectories was considered in \cite{d2} and \cite{dq}, using EVT techniques, while asymptotic results for the shortest distance between two orbits were obtained in \cite{short} and then generalized to multiple orbits \cite{mult} and finally to observed orbits \cite{encoded}.

Yet another motivation to study this problem is its deep relationship with a seemingly distinct one; the length of the longest matching block common to different sequences of symbols drawn from the same probability distribution. This old problem has been initiated by Waterman and Arratia, who brought a plethora of results in the i.i.d. case \cite{erdos,extreme}, most of which are presented in the reference book \cite{watbook}. Several authors have extended these results, giving for example distributional results in the i.i.d case \cite{distrib,house}. In many applications,  however, the sequences cannot be modeled as i.i.d. sequences. For example, in biological applications, genes are specific sequences encoding information, and DNA brands do not constitute independent sequences of nucleotides. When it comes to written text, a complex dependence structure can arise from specific sequences of letters, such as words, and higher-order syntactic and narrative structures. Recently, Barros, Liao and Rousseau adopted a dynamical system point of view to give the asymptotic behavior of the length of the longest sub-sequence common to different $\alpha-$mixing sequences \cite{short,mult}. This problem is different from the present one, because the sub-sequences may be present at different locations of the different strings of symbols, but we will also follow a dynamical system approach to derive our results. 

The paper is organized as follows: In the first section, we present the problem and derive our main result concerning the convergence of the statistics of observation matching to a Gumbel distribution. In the following sections, we discuss the parameters of the limit law, since these quantities provide relevant dynamical information on the system and can be estimated numerically. We first focus on the generalized dimensions of the image measure by emphasizing their central role in the statistical properties of observations. We also study their relations with the generalized dimension spectrum of the original measure. In the third section, we derive a formula for the Extremal Index associated with this problem for a class of chaotic maps of the interval and perform a numerical study of this index for higher dimensional systems. In the last part, we present some applications of our results to sequence matching problems. In particular, we obtain distributional results for the length of the longest sequence of symbols common to independent strings of symbols drawn from the same strongly mixing probability measure.\\

\section{The general approach}

Let us consider the dynamical system $(\mathcal{M},T,\mu)$, where $\mathcal{M}$ denotes the phase space and $T:\mathcal{M} \to \mathcal{M}$ is a discrete transformation \footnote{it could be a discretized version of a flow} that leaves the probability measure $\mu$ invariant. In order to model the process of measurement, we consider a $C^1$ function $f:\mathcal{M} \to \mathcal{J}$, which we refer to as the {\em observation}. Both the phase space $\mathcal{M}$ and the observation space $\mathcal{J}$ are compact metric spaces endowed with two distances that we will both call $d$ to simplify notations. For physical applications, we take $\mathcal{J} \subset\mathbb{R}^m$, as observational data usually consists of a collection of real numbers that can be arranged into vectors. For applications to the problem of sequence matching, we will take $\mathcal{M}$ to be the space of all infinite sequences of symbols of a given alphabet $\mathcal{A}$.  Because we are interested in the statistical properties of observations, we need a measure that is supported in the observational space.

\begin{definition} We call the push-forward, or image of the measure $\mu$ by the function $f$, the measure $\mu_f$ defined by $$\mu_{f}(A)=\mu(f^{-1}(A)),$$ for all $A \subset \mathcal{J}$ such that $f^{-1}(A)$ is $\mu-$measurable. \end{definition}

A more detailed presentation of this object is available in \cite{roussol} and a discussion of its properties can be found in \cite{obsrec}. 

\begin{definition}\label{defdq}
We call the generalized dimension of order $q\neq 1$ of the image measure $\mu_f$ the following quantity (if it exists):

\begin{equation}
D_q^f=\underset{r\to 0}{\lim} \frac{\log \int_{\mathcal{J}} \mu_f(B(y,r))^{q-1}d\mu_f(y)}{(q-1)\log r}. 
\end{equation}

$B(y,r)$ denotes a ball centered at $y\in \mathcal{J}$ of radius $r$.\\

The information dimension of $\mu_f$ is defined as
\begin{equation} 
D_1^f=\underset{q\to 1}\lim D_q^f.
 \end{equation}
 We write $D_q=D_q^{Id}$, the generalized dimension of order $q$ of the original measure.
\end{definition}

We will place ourselves in physical situations where the limits defining the previous quantities exist. Now that we have introduced the important objects of the theory, we go forward and consider the following process:

$$ Y_i=-\log\max_{j=2,\dots,q}d(f(T^ix_1),f(T^ix_j)),$$

$(x_1,...,x_q) \in \mathcal{M}^q$ being a starting point drawn from the product measure $\mu_q$ with support in $\mathcal{M}^q$. To follow the usual procedure of Extreme Value Theory, we consider a sequence of thresholds $u_n(s)$, where $s\in \mathbb{R}$, such that:

\begin{equation}\label{tau} \mu_q(Y_0 > u_n(s)) \sim \frac{e^{-s}}{n}. \end{equation}

Since the $q$ trajectories are independent, we also have:

\begin{equation}\label{dqf} \begin{aligned} \mu_q(Y_0> u_n(s)) &=\int_\mathcal{J} \mu_f (B(y,e^{-u_n}))^{q-1}d \mu_f(y)\\ & \sim e^{-u_nD^f_q(q-1)}, \end{aligned} \end{equation}

from definition 2. To satisfy both scalings \ref{tau} and \ref{dqf}, we take

$$u_n(s)=\frac{\log n}{D_q^f(q-1)}+\frac{s}{D_q^f(q-1)}.$$

Now, for a given threshold $u_n$, the quantity $\mu_q(Y_0> u_n)$ gives the probability of having all the observations contained in the same small region of the observational space; a ball of radius $e^{-u_n}$ centered at $f(x_1)$. Equivalently, it gives the probability that the product dynamics has entered the following target set: 

$$S^q_n=\{(s_1,...,s_q)\in \mathcal{M}^q, \max_{j=2,\dots,q}d(f(s_1),f(s_j)) < e^{-u_n}\}.$$

Following the ideas of \cite{fft}, studying the behavior of the maximum of the process $(Y_i)$ over a trajectory of size n:

$$M_{n,q}(x_1,...,x_q)=\max \{Y_0,\dots,Y_{n-1}\},$$

and in particular its cumulative distribution:

$${F_n}(u_n) = \mu_q(\{(x_1,...,x_q) \in \mathcal{M}^q \mbox{ s.t. } M_{n,q}(x_1,...,x_q) \leq u_n \}),$$

is equivalent to studying the Hitting Time Statistics of the product dynamics in the set $S^q_n$. Indeed, ${F_n}(u_n)$ gives the probability that the dynamics has not entered the set $S^q_n$ after $n$ iterations of the dynamics. We can now apply results from EVT, in particular, the spectral theory developed by Keller and Liverani \cite{kl,k}, to obtain the convergence of ${F_n}(u_n)$ to its limit law.

\begin{proposition} 
For a large class of exponentially-mixing systems and {\em regular} observations, there exists $0<\theta_q^f\leq 1$ such that:

\begin{equation}
|F_n(u_n(s)) -\exp(-\theta^f_q e^{-s})| \underset{n\to\infty}\to 0.
\end{equation}

\end{proposition}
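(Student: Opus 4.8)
The plan is to exploit the equivalence, established just above the statement, between the distribution $F_n(u_n(s))$ of the maximum $M_{n,q}$ and the Hitting Time Statistics of the product system $(\mathcal{M}^q, F:=T\times\cdots\times T, \mu_q)$ into the shrinking target sets $S^q_n$, and then to run the spectral perturbation machinery of Keller and Liverani \cite{kl,k}. Let $P$ be the transfer operator of $F$, normalized with respect to $\mu_q$ so that $P\mathbf{1}=\mathbf{1}$, acting on a Banach space $\mathcal{B}$ of regular densities (functions of bounded variation, or quasi-H\"older functions, on which the product of exponentially-mixing interval maps satisfies a uniform Lasota--Yorke inequality and possesses a spectral gap). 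I would introduce the punctured operators $\tilde P_n g = P\bigl(g\cdot \mathbf{1}_{(S^q_n)^c}\bigr)$, which propagate the mass that has so far avoided the target. The survival probability is then, by duality, $F_n(u_n)=\int \tilde P_n^{\,n}\mathbf{1}\,d\mu_q$, so the asymptotics of $F_n$ are entirely governed by the leading eigenvalue $\lambda_n$ of $\tilde P_n$ and by the decay of its subleading spectrum.

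The core step is to verify the Keller--Liverani hypotheses for the family $\{\tilde P_n\}$ regarded as a perturbation of $P$. This requires a uniform Lasota--Yorke inequality for the $\tilde P_n$ (inherited from $P$, since multiplication by an indicator is bounded on $\mathcal{B}$), together with smallness of the perturbation in the mixed norm $\|P-\tilde P_n\|$ from the strong norm of $\mathcal{B}$ to the weak ($L^1(\mu_q)$) norm, controlled by $\mu_q(S^q_n)\to 0$. Under these hypotheses the Keller--Liverani theorem yields a simple leading eigenvalue $\lambda_n\to 1$ with the expansion $1-\lambda_n=\theta^f_q\,\mu_q(S^q_n)\,(1+o(1))$, where the extremal index is recovered as the normalized proportion of target mass that does not immediately return,
\begin{equation}
\theta^f_q=\lim_{k\to\infty}\ \lim_{n\to\infty}\ \frac{\mu_q\bigl(S^q_n\setminus\bigcup_{i=1}^{k}F^{-i}S^q_n\bigr)}{\mu_q(S^q_n)},
\end{equation}
which is manifestly valued in $[0,1]$.

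To conclude, I would combine this expansion with the normalization $n\,\mu_q(S^q_n)\sim e^{-s}$ coming from \eqref{tau} and \eqref{dqf}. Controlling the contribution of the subleading spectrum, which decays geometrically and uniformly in $n$, and the error between $\int \tilde P_n^{\,n}\mathbf{1}\,d\mu_q$ and $\lambda_n^{\,n}$, one obtains
$$F_n(u_n(s))=\lambda_n^{\,n}\bigl(1+o(1)\bigr)\longrightarrow \exp\bigl(-\theta^f_q\,e^{-s}\bigr),$$
since $\lambda_n^{\,n}=\bigl(1-\theta^f_q\,\mu_q(S^q_n)(1+o(1))\bigr)^n\to \exp(-\theta^f_q e^{-s})$. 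This is precisely the claimed convergence, with $\theta^f_q$ as the Gumbel-modulating extremal index.

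The main obstacle is analytic rather than conceptual, and stems from the geometry of the targets: the sets $S^q_n=\{\max_{j}d(f(s_1),f(s_j))<e^{-u_n}\}$ are not balls but diagonal tubes, whose preimages under $f^{\times q}$ have delicate boundary structure. I expect two tight points. First, one must show that the strong-to-weak perturbation norm of $\mathbf{1}_{S^q_n}$ in $\mathcal{B}$ is genuinely comparable to $\mu_q(S^q_n)$ and is not inflated by the oscillation of $f$ near the diagonal; this is exactly where the \emph{regular observation} hypothesis enters, guaranteeing that $\mu_f$ has the clean local scaling encoded in the definition of $D^f_q$. Second, one must prove that the return-ratio limit defining $\theta^f_q$ exists and is strictly positive, so that $\theta^f_q\in(0,1]$ rather than $0$; this amounts to ruling out that a fixed positive fraction of the target mass is re-absorbed at every scale, and is the step where the interplay between the hyperbolicity of $T$ and the smoothness of $f$ must be quantified.
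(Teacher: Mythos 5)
Your proposal follows exactly the route the paper itself invokes for Proposition 1: the Keller--Liverani spectral perturbation theory applied to the punctured transfer operators of the product system, with the extremal index recovered as the non-return fraction $1-\sum_k p_{k,q}$ (your double-limit formula is equivalent to the paper's equations for $p_{0,q}$ and $p_{k,q}$), and the two technical bottlenecks you flag (regular decay of $\mu_q(S^q_n)$ for the REPFO property, and existence/positivity of the return-ratio limit) are precisely the hypotheses the paper states but does not verify. Note only that the paper gives no detailed proof of this proposition --- it defers to \cite{kl,k} and also records an alternative route via conditions \foreignlanguage{russian}{Д}$_1(u_n)$ and \foreignlanguage{russian}{Д}$_1'(u_n)$ from \cite{book} --- so your sketch is a faithful, somewhat more explicit account of the same argument.
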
 

The term $\theta^f_q$ is called the Extremal Index (EI) and quantifies the tendency of the process $(Y_i)$ to form clusters of high values. To be applicable, the spectral theory requires that the couple system/observation satisfies the so-called REPFO property \cite{kl,k}, which is verified for rapidly mixing systems for which the measure of the nested target sets $S^q_n$ goes to zero in a regular fashion. More detailed presentations of the theory and its domain of application can be found in various publications \cite{d2,dq,kl,k,synchro}. The theory is proven to be particularly adapted to expanding maps of the interval \cite{book,synchro} and certain well-behaved 2-dimensional systems \cite{bakersandro}.

 More classical tools can also be used to prove the convergence to the limit law, in particular under the following conditions, that are particularly adapted to processes generated by dynamical systems.
 
 \begin{definition}\label{defd}
 
 We say that the condition \foreignlanguage{russian}{Д}$_1(u_n)$ is satisfied for the process $Y_0,Y_1,...$ if there exist a function $\gamma(n,t)$ such that for every $l,t,n\in \mathbb{N}$,
 
\begin{equation}\label{ffcond}
 |\mu_q(A_n\cap B_{t,l,n})-\mu_q(A_n)\mu_q(B_{0,l,n})| \le \gamma(n,t),
 \end{equation}
 
 where $A_n=\{Y_0>u_n,Y_1\le u_n\}$, $B_{t,l,n}=\bigcap_{i=t}^{t+l-1}T^{-i}(A_n^c)$, and the function $\gamma(n, t)$ is such that it is decreasing in $t$ for each $n$ and such that there exists a sequence $(t_n)_n\in\mathbb{N}$
satisfying $t_n = o(n)$ and $n\gamma(n, t_n)  \underset{n\to\infty}\to 0$.\\
 \end{definition}
 \begin{definition}\label{defdkun}
 
 We say that \foreignlanguage{russian}{Д}$_1'(u_n)$ holds if there exist a sequence $(k_n)_n$ such that:
 
 \begin{enumerate}
\item  $k_n \underset{n\to\infty}\to \infty$.
\item $k_nt_n=o(n)$, where $(t_n)_n$ is the sequence in definition \ref{defd}.
\item $\underset{n\to\infty}\lim n\sum_{j=2}^{\lfloor\frac{n}{k_n}\rfloor-1}\mu_q(Y_0>u_n\cap Y_1\le u_n\cap Y_{j}>u_n)=0.$\label{dkun}
\end{enumerate}
\end{definition}

Under these two conditions, the result of Proposition 1 holds \cite{book}. We stress that these conditions depend both on the application $T$ and on the observation $f$. \foreignlanguage{russian}{Д}$_1(u_n)$ is expected to hold for rapidly mixing systems and regular observations. In particular, we show in the annex that, at least in the context of symbolic dynamics and if $f=Id$, strong exponential mixing implies \foreignlanguage{russian}{Д}$_1(u_n)$. Condition \foreignlanguage{russian}{Д}$_1'(u_n)$ concerns the clustering structure of the process $Y_i$. More particularly, it controls the probabilities of short returns to the target set $S^q_n$. It is not our focus to give more appropriate conditions of convergence to the limit law, since these can be hard to check in dimension more than one, or sometimes two\footnote{for simple systems such as automorphisms of the torus \cite{jorgeauto} or systems admitting a product structure \cite{jorgemult}}, and even more so when a non-trivial observation $f$ is introduced. We will however provide numerical evidence of the convergence to the extreme value law. Let us now discuss the values of the different parameters of the limit law, that can acquire a physical interpretation. 

\section{The Generalized Dimensions of the image measure $D_q^f$}
%

%

\subsection{On the relation between the Generalized dimensions of the image measure and the one of the original invariant measure}

We have seen in the preceding section that the quantity $D_q^f$ appears as a parameter of the limit law and therefore modulates the synchronization properties of observations. In fact, these quantities play a central role in different aspects of the statistical properties of observations, and in particular their recurrence times. It is well known that both return and hitting times of certain chaotic systems in small balls (in fact, re-scaled versions of these quantities) have large deviations that are governed by the spectrum of generalized dimensions $D_q$ of the invariant measure \cite{dq,ldr}. These kinds of large deviations relations are known to hold for real trajectories, but similar results are also expected to apply to the recurrence times of observations for such systems. This matter will be investigated more in detail in a future publication. For now, let us focus on the properties of $D_q^f$, and in particular on their relation to the generalized dimensions $D_q$ of the original system.
In \cite{hk}, Hunt and Kaloshin give results concerning the effect of typical projections on the generalized dimensions for $1\le q \le 2$. In this range, they show that if $\mathcal{M}$ is a compact subset of $\mathbb{R}^n$ and $\mathcal{J}=\mathbb{R}^m$, and if the generalized dimension of order $q$, $D_q$ of the invariant measure exists, then: 

\begin{equation}\label{hk}
D_q^f=\min(D_q,m),
\end{equation}

 for a prevalent set of $C^1$ observables. See \cite{prev} for a review of prevalence, which is a notion of genericity for infinite-dimensional spaces. For $q>2$, no such result holds and the behavior of $D_q^f$ in this range is not yet completely understood. Under the light of  Hunt and Kaloshin's result, it is possible to access the correlation dimension $D_2$ of a physical system using a generic observation if the rank is large enough (larger than the correlation dimension of the original attractor). This quantity can be obtained by fitting the empirical distribution of $M_{n,q}$ and extracting the desired parameter, as we will do in the following subsection. Such EVT-based methods of computation of fractal dimensions is in use in climate studies, in particular for the computation of the local dimension, which can be used as a tool to characterize certain climatic patterns \cite{nature,messori}.\\
 
  Different kinds of large rank observations can be used by physicists to recover information on the original system. A first approach is to record simultaneously the value of a scalar quantity at different locations of a spatially extended system. These measurements can be arranged into a vector and constitute a so-called gridded observation in $\mathbb{R}^m$. Instead of recording the same quantity at different positions, one can also record different independent observables (temperature, position, speed, pressure, ...) at a given time. Yet another possibility is to consider delay coordinates observables used in embedding techniques \cite{takens}. In this context, it is well known that if one considers enough delay coordinates (larger than $\lceil2D_0\rceil$), the dynamics of the observation settles on an object (the so-called reconstructed attractor) that is a smooth deformation of the original attractor, which preserves the dimensions  \cite{takens}. With our approach, only $m\ge D_2$ delay coordinates are required to access the correlation dimension $D_2$, although the reconstructed attractor is now likely to have a different fine structure from the original one.\\

\subsection{Numerical extraction of $D_q^f$}

  Let us now investigate the values of $D_q^f$ for $q>2$ from a numerical perspective. This procedure will also allow us to experimentally intuit the convergence of the distribution of $M_{n,q}$ to its limit law. Let us consider a system for which the explicit values of $D_q$ are available; the motion on a Sierpinski gasket given by the following Iterated Function System on the unit square $\mathcal{M}=[0,1]^2$:
\beq
 \left\{
    \begin{array}{ll}
   T_1(x,y)=(x/2,(y+1)/2), \;p_1=1/4,\\
T_2(x,y)=((x+1)/2,(y+1)/2),\; p_2=1/4,\\
T_3(x,y)=(x/2,y/2),\; p_3=1/2.\\
    \end{array}
\right.
\nuq{17.5}
At each iteration, the application $T_i$ is applied with probability $p_i$. The associated generalized dimensions spectrum is given, for $q\neq 1$, by \cite{dq}:
 
\begin{equation}\label{serp}
D_q=\frac{\log_2(p_1^q+p_2^q+p_3^q)}{1-q}.
\end{equation}  

  In figure (\ref{dqs}), we compare the numerical estimates of $D^f_q$ for different observations $f$ and the theoretical values of $D_q$ given by equation (\ref{serp}). These estimates are obtained by evaluating the scale parameter of the empirical maximum distribution of the process $(Y_i)$ over blocks of size $5.10^4$, using the maximum likelihood estimator provided by the Matlab function gevfit \cite{gevfit}. The results are averaged over 10 runs, using each time different randomly selected trajectories of length $2.10^8$. The error bars represent the standard deviations of the results over these 10 runs.\\
   The functions $f_1$, $f_2$ are diffeomorphisms, which are known to preserve the generalized dimensions \cite{hk}. Indeed, for these two functions, good agreement is found, so that the two curves are hardly distinguishable visually in the picture. These results suggest that this method of computation of $D_q$ can be completed and even improved by introducing a diffeomorphism computed along the orbits of the system, which may, if well chosen, speed up the convergence of the method and provide better estimates. Function $f_3$ is a very oscillatory function, which gives a point in the observational space many antecedents, having the effect to alter significantly the fine structure of the image measure. We do not know whether the disagreement with the $D_q$ spectrum is due to the method not being at convergence, or if is a sign that the spectrum is not preserved under the action of $f_3$. However, the small disagreement for $q=2$ seems to indicate that the method may not be at convergence, since the correlation dimension is preserved by typical observations. $f_4$ is not a diffeomorphism either, but has a more simple structure. For this function, the generalized dimensions seem to be preserved. $f_5$ is a degenerate function yielding values close to 1.\\
 \begin{figure}[h!]
 \centering
\includegraphics[height=3.5in]{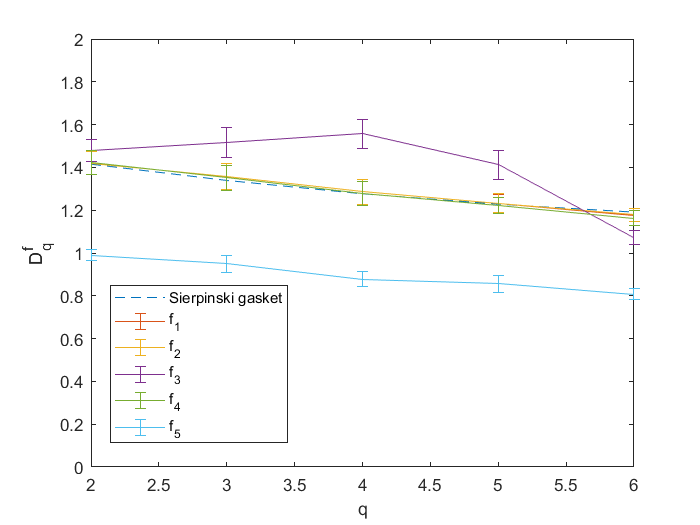}
\caption{Numerical estimates of $D_q^f$ for different observations: $f_1=Id$, $f_2(x,y)=(2x+y,2y)$, $f_3(x,y)=(\sin(\frac1x),\cos(\frac1y))$, $f_4(x,y)=((x-0.5)^2,2y)$ and $f_5=(1,y^2+x)$. In dashed lines is the $D_q$ spectrum of the underlying system. Estimates are computed as described in the text.}
\label{dqs}
\end{figure}

In \cite{obsrec}, we showed that for the two-dimensional baker's map, which has a non trivial $D_q$ spectrum, a typical linear uni-dimensional projection gives $D_q^f=1$ for all $q$. Overall, this result, along with our numerical computations, suggests that Hunt and Kaloshin's results may extend to $q>2$ for a certain class of measures and certain smooth observations. We hope to provide their characterization on future investigations.
 
\section{The Extremal Index $\theta^f_q$}
\subsection{An explicit formula for expanding maps of the interval}
When considering real trajectories (i.e. when $f=Id$), the Extremal Index $\theta_q$, and more specifically the quantity $$h_q=\frac{\log(1-\theta_q)}{1-q},$$ encodes the hyperbolic properties of the system (see \cite{dq} for a detailed review). In particular, $h_q$ as a function of $q$ is constant for maps with constant Jacobian and is close to the metric entropy of the system (its Lyapunov exponent in dimension 1). When an observation $f$ is introduced, the use of the Extremal Index to quantify the rate at which nearby trajectories diverge becomes less relevant, in particular because two nearby points in observational space may have antecedents far away in the actual phase space of the system. Let us investigate this matter with more detail.
Keller and Liverani \cite{kl} provide a general formula for the Extremal Index of time series originated by dynamical systems. Applied to the present situation and if the limits defining the different quantities exist, we have that

\begin{equation}\label{deftheta}
\theta^f_q=1-\sum_{k=0}^{\infty}p_{k,q},
\end{equation}

where 

\begin{equation}\label{p0}
p_{0,q}=\lim_{n\to\infty}\frac{\mu_q(S^q_n \cap T^{-1} S^q_n)}{\mu_q(S^q_n)}
\end{equation}

and for $k\ge 1$,

\begin{equation}\label{pk}
p_{k,q}=\lim_{n\to\infty}\frac{\mu_q(S^q_n \cap \bigcap_{i=1}^k T^{-i}(S^q_n)^c \cap T^{-k-1} S^q_n)}{\mu_q(S^q_n)}.
\end{equation}

In this general set up, obtaining a formula for $\theta^f_q$ is challenging, so let us place ourselves in the more simple case of expanding maps of the unit interval $I=[0,1]$. We define the following sets for a given $x\in I$ :
$$A_0(x)=\{y \in I \text{ such that } f(y)=f(x) \text{ and } f(Ty)=f(Tx)\}$$
and
$$A_k(x)=\{y \in I \text{ such that } f(y)=f(x), f(T^iy)\neq f(T^ix),\text{ for } i=1,..,k\text{ and } f(T^{k+1}(y))=f(T^{k+1}(x))\}.$$

\begin{proposition}
Let $T$ be an expanding map of the unit interval $I=[0,1]$ which is $C^1$ by part and admitting an absolutely continuous invariant measure $d\mu(x)=h(x)dx$. Let $f : I \to J \subset \mathbb{R}$ be $C^1$ by part, finite to one and such that $f' \neq 0$ on $I$. Suppose moreover that the couple $(T,f)$ satisfies the conditions of Proposition 1, that

\begin{equation}\label{h1}
\mu(\{x \in I,A_0(x)=\{x\} \})=1
\end{equation} 

and that, for all $k\ge 1$,

\begin{equation}\label{h2}
\mu(\{x\in I, A_k(x)= \emptyset \})=1.
\end{equation}

Then:
 
\begin{equation}\label{thetaq}
\theta^f_q=1-\frac{\int_I \frac{h(x)^q}{\max(|f'(x)|,|(f\circ T)'(x)|)^{q-1}}dx}{\int_I \sum_{(y_1,...y_{q-1})\in (f^{-1}\{f(x)\})^{q-1}} \prod_{i=1}^{q-1}\frac{h(y_i)}{|f'(y_i)|}h(x)dx}.
\end{equation}
\end{proposition}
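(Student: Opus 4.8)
The plan is to evaluate the Keller--Liverani expression \eqref{deftheta} term by term, by computing the asymptotics of the measures of the nested target sets as $r:=e^{-u_n}\to 0$. The key structural fact is that, since $f$ is finite-to-one with $f'\neq 0$, for small $r$ the preimage $f^{-1}(B(f(x),r))$ splits into disjoint intervals, one around each $y\in f^{-1}\{f(x)\}$, of length $\approx 2r/|f'(y)|$ and mass $\approx 2r\,h(y)/|f'(y)|$. Using that the $q$ coordinates are independent and that $S^q_n$ is the event that $f(x_2),\dots,f(x_q)$ all lie in $B(f(x_1),r)$, one gets $\mu_q(S^q_n)=\int_I \mu_f(B(f(x),r))^{q-1}h(x)\,dx$, and the preimage decomposition yields
\[
\mu_q(S^q_n)\sim (2r)^{q-1}\int_I \sum_{(y_1,\dots,y_{q-1})\in(f^{-1}\{f(x)\})^{q-1}}\prod_{i=1}^{q-1}\frac{h(y_i)}{|f'(y_i)|}\,h(x)\,dx,
\]
which is $(2r)^{q-1}$ times the denominator of \eqref{thetaq}.

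For $p_{0,q}$ I would analyse $S^q_n\cap T^{-1}S^q_n$: a configuration lies in it iff, for each $j$, both $|f(x_j)-f(x_1)|<r$ and $|f(Tx_j)-f(Tx_1)|<r$. Localising $x_j$ near a preimage $y\in f^{-1}\{f(x_1)\}$, the second inequality can hold for small $r$ only when $f(Ty)=f(Tx_1)$, i.e.\ $y\in A_0(x_1)$; hypothesis \eqref{h1} forces $y=x_1$. Near $x_1$ the two constraints are intervals centred at $x_1$ of half-lengths $r/|f'(x_1)|$ and $r/|(f\circ T)'(x_1)|$, so their intersection carries mass $\approx 2r\,h(x_1)/\max(|f'(x_1)|,|(f\circ T)'(x_1)|)$. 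Raising to the power $q-1$ and integrating against $h$ produces $(2r)^{q-1}$ times the numerator of \eqref{thetaq}, whence $p_{0,q}$ equals precisely the ratio there.

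The heart of the argument is to show $p_{k,q}=0$ for every $k\ge 1$, which is where \eqref{h2} enters. As $r\to 0$ a contributing configuration must have each $x_j$ near a preimage $y^{(j)}\in f^{-1}\{f(x_1)\}$ with $f(T^{k+1}y^{(j)})=f(T^{k+1}x_1)$, while the escape event $\bigcap_{i=1}^{k}T^{-i}(S^q_n)^c$ requires that at each intermediate time $i$ some coordinate satisfies $f(T^iy^{(j)})\neq f(T^ix_1)$. I would show that any coordinate which ever escapes yields, through its first escape time $i_1$ and first later rejoining time $i_2\le k+1$, a point $a=T^{i_1-1}y^{(j)}$ belonging to $A_{i_2-i_1}(T^{i_1-1}x_1)$; invariance of $\mu$ together with \eqref{h2} then makes the set of such $x_1$ null. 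Hence no coordinate can escape, every $y^{(j)}$ collapses to $x_1$ by \eqref{h1}, the escape event becomes incompatible, and the relative measure vanishes, giving $p_{k,q}=0$. Summing \eqref{deftheta} then leaves $\theta^f_q=1-p_{0,q}$, which is \eqref{thetaq}.

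I expect three points to require care. First, controlling the error in the preimage-interval approximation uniformly in $x$, so that the $C^1$ pieces of $f$ and $f\circ T$ may legitimately be replaced by their derivatives inside the integrals, and so that the local pictures glue into global asymptotics; the finitely many branch points of $T$ and discontinuities of $f$ must be excised and handled separately. Second, the disjunctive ``some $j$'' nature of the escape condition, which is exactly what the first-escape/first-rejoin reduction to the sets $A_m$ is designed to tame. Third, justifying the interchange of $\lim_{n\to\infty}$ with the summation over $k$ in \eqref{deftheta}: one needs a uniform tail bound guaranteeing $\sum_{k\ge 1}p_{k,q}=0$, and not merely the vanishing of each individual term.
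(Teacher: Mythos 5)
Your proposal follows essentially the same route as the paper: compute the asymptotics of the numerator and denominator of each $p_{k,q}$ in the Keller--Liverani formula via the preimage decomposition of $f^{-1}(B(f(x),r))$ and the mean value theorem, identify the contributing preimages with the sets $A_k(x)$, and invoke hypotheses (\ref{h1}) and (\ref{h2}) to reduce $A_0(x)$ to $\{x\}$ and to kill all $p_{k,q}$ with $k\ge 1$. The technical caveats you flag at the end (uniformity of the local approximations, the disjunctive escape condition for $q>2$, and interchanging the limit with the sum over $k$) are points the paper's proof also passes over without comment, so your outline is faithful to, and if anything slightly more careful than, the published argument.
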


\begin{proof}
We write the proof for $q=2$, the cases $q>2$ can be obtained in a similar fashion.
We start from formula (\ref{deftheta}) and evaluate both the numerators and the denominators defining the $p_{k,2}$ terms. Let us start by the denominator, for the case $k=0$. Following the lines of the proof in \cite{synchro} (where the case $f=Id$ is treated), and making use of the mean value theorem, we get:
\begin{equation}\label{1}
\begin{aligned}
\mu_2(S^2_n)&\sim \int_I \sum_{y\in f^{-1}\{f(x)\}} \mu(B(y,\frac{e^{-u_n}}{|f'(y)|})) d\mu(x)\\
                     &\sim 2e^{-u_n} \int_I \sum_{y\in f^{-1}\{f(x)\}} \frac{h(y)}{|f'(y)|} h(x)dx.
\end{aligned}
\end{equation}

On the other hand, still for the case $k=0$, we get for the numerator:

\begin{equation}\label{2}
\begin{aligned}
\mu_2(S^2_n \cap T^{-1} S^2_n) &\sim \int_I \sum_{y\in A_0(x)} \mu(\{z\in I, z\in B(y,\frac{e^{-u_n}}{|f'(y)|})\cap Tz \in B(Ty,\frac{e^{-u_n}}{|f'(Ty)|}\})d\mu(x)\\
                                                   &\sim \int_I \sum_{y\in A_0(x)} \mu(\{z\in I, |z-y| \le \frac{e^{-u_n}}{|f'(y)|} \cap T'(y)|y-z| \le \frac{e^{-u_n}}{|f'(Ty)|}\}) h(x)dx.\\
                                                   &= \int_I \sum_{y\in A_0(x)} \mu(\{z\in I, |z-y| \le  \min(\frac{e^{-u_n}}{|f'(y)|},\frac{e^{-u_n}}{|T'(y)f'(Ty)|})\}) h(x)dx.\\
                                                   &\sim  2e^{-u_n} \int_I \sum_{y\in A_0(x)} \frac{h(y)h(x)}{\max(|f'(y)|,|(f\circ T)'(y)|)} dx.\\
\end{aligned}
\end{equation}

By a similar reasoning, we get that for $k \ge 1$,

\begin{equation}\label{3}
\mu_2(S^2_n \cap \bigcap_{i=1}^k T^{-i}(S^2_n)^c \cap T^{-k-1} S^2_n) \sim 2e^{-u_n} \int_I \sum_{y\in A_k(x)} \frac{h(x)h(y)}{\max(|f'(x)|,|(f(T^{k+1}(y))'|)}dx.\\
\end{equation}

Finally, combining eqs. (\ref{deftheta}),(\ref{1}), (\ref{2}) and (\ref{3}), we obtain

\begin{equation}
\theta^f_2=1 - \sum_{k=0}^{+\infty} \frac{\int_I \sum_{y\in A_k(x)} \frac{h(x)h(y)}{\max(|f'(x)|,|(f\circ T^{k+1})'(y)|)}dx}{\int_I \sum_{y\in f^{-1}\{f(x)\}} \frac{h'(y)}{|f'(y)|} h(x)dx}.
\end{equation}

This formula is still difficult to handle, but under condition (\ref{h2}), $p_{k,2}=0$ for $k>0$, and if condition (\ref{h1}) holds, we obtain 

\begin{equation}\label{thetafin}
\begin{aligned}
\theta^f_2&=1-p_{0,2}\\
             &=1-\frac{\int_I \frac{h(x)^2}{\max(|f'(x)|,|(f\circ T)'(x)|)} dx}{\int_I \sum_{y\in f^{-1}\{f(x)\}} \frac{h(y)h(x)}{|f'(y)|}dx}.
\end{aligned}
\end{equation}

We can generalize this result for $q\ge 2$ to obtain the desired result.
\end{proof}\\

\begin{remark}
For a given map $T$, assumptions (\ref{h1}) and (\ref{h2}) should be satisfied for a generic observation $f$. The cases where these assumptions are not satisfied are when $T$ and $f$ share some particular symmetries and similarities in their structures. For example, $\mu(A_0(x) = \{x\})\neq 1$ if both the graphs of $T$ and $f$ are symmetric with respect to the straight line of equation $x=1/2$.
\end{remark}


\begin{example} Let us take $Tx=2x \mod 1 $ and $$f(x)=\left\{
    \begin{array}{ll}
        2x & \mbox{ if } 0\le x\le 1/2 \\
        3/2-x & \mbox{ if } 1/2<x\le 1.\\
    \end{array}
\right.$$

$T$ is strongly mixing and the couple $(T,f)$ satisfies conditions (\ref{h1}) and (\ref{h2}), so that $(T,f)$ should satisfy the conditions of existence of the limit law, \foreignlanguage{russian}{Д}$_1(u_n)$ and \foreignlanguage{russian}{Д}$_1'(u_n)$. It constitutes a good test for our results, since computations can be worked out quite easily. Applying formula (\ref{thetaq}), we get $$\theta^f_q=1-p_{0,q}=1-\frac{2+2^{2-q}}{1+3^q}.$$ This result is confirmed by numerical experiments (see figure (\ref{1})). We used the estimator $\hat{\theta}_{5}$ introduced in \cite{ei}, which consists in evaluating the 5 first $p_{k,q}$ terms appearing in formula (\ref{pk}). To do so, we compute Birkhoff sums for both the numerator and the denominator defining the $p_{k,q}$ terms. It requires fixing a high threshold $u$, that we take here equal to the 0.99999-quantile of the empirical $Y_i$ distribution. As expected, we find that all the $p_{k,q}$ are 0 or very close to 0 for $k\ge1$. The results are averaged over 10 runs, with trajectories of length $2.10^7$. The error bars in figure (\ref{1}) represent the standard deviations of the results over these 10 runs.
In this example, the exact limit distribution can be computed explicitly; since the image measure is absolutely continuous with a density that does not vanish and that admits no singularities, $D_q^f=1$ for all $q$.\\

\end{example}

\begin{figure}[h!]
\centering
\includegraphics[height=3.5in]{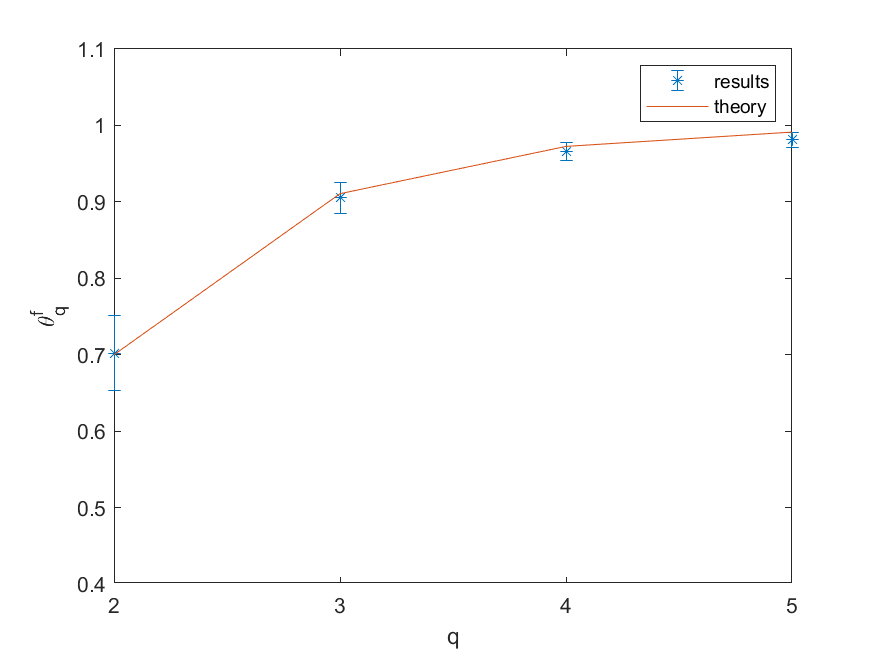}
\caption{Comparison between theory and computation for the $\theta_q^f$ spectrum of the system in Example 1. Details of the computation can be found in the text.}
\label{1}
\end{figure}

\subsection{Numerical estimation of $\theta_q^f$ in higher dimensional systems}

A general formula for higher-dimensional system is out of scope, but we expect that with conditions of `non compatibility' between the dynamics and the observation analogue to conditions (\ref{h1}) and (\ref{h2}), all the $p_{k,q}$ terms are 0 for $k\ge1$. The aim of this section is to show that this hypothesis is corroborated by numerical experiments.\\

 For the uni-dimensional case, the presence of the derivative of the observation in formula (\ref{thetaq}) renders the interpretation of $\theta_q^f$ less apparent than in the case $f=Id$. However, we point out two facts :

\begin{itemize}
\item For a given observation $f$, the larger the values of $|T'|$ over phase space, the larger the values of $\theta_q^f$, so this index can still quantify the hyperbolic properties of $T$.
\item For a given map $T$, the more the points in the observational space have antecedents by $f$, the larger is the denominator in equation (\ref{thetaq}), and the larger is $\theta_q^f$. Oscillatory observations yield higher values for the extremal index.

\end{itemize}

We expect analogous properties to hold for higher dimensional systems. To test this statement, we compare in figure (\ref{thetf}a) the estimates of $\theta_q^f$ for the 2-dimensional H\'enon system, defined by $T(x,y)=(1-ax^2+y,bx)$, with $a=1.4$ and different values of $b$ such that the system admits a strange attractor \cite{henon}. We consider the  observation $f(x,y)=\frac{x+y}{2}$. The determinant of the Jacobian is given by $b$. We find indeed that for this fixed choice of observation, the more the original system tends to separate trajectories (the higher is parameter $b$), the higher are the values of $\theta_q^f$, even for uni-dimensional projections. The estimates $\hat{p}_{k,q}$ of the $p_{k,q}$ terms, for $k>0$ are all null or close to 0 for all the observations that we considered, as conjectured earlier.\\

In figure (\ref{thetf}b), we plot the estimates of the extremal index for 2-dimensional H\'enon system (using the usual parameters $a=1.4$, $b=0.3$) and different observations. We observe that for one-to-one observations, ($f_1$, $f_2$ and $f_3$), the $\theta_q^f$ spectrum remains relatively low, although the form of the Jacobian can impact significantly the values of $\theta_q^f$. When the observation ceases to be one-to-one, the whole spectrum of extremal indices increases significantly (see the curve for $f_4$). This effect is even more important for the very oscillatory function $f_5$. For analogous reasons, we expect that for high dimensional systems, observations that perform a large drop of dimensionality tend to yield higher values for the $\theta_q^f$ spectrum.

%

\begin{figure}
\centering
\begin{subfigure}{.5\textwidth} \label{thet2}
  \centering
  \includegraphics[width=1\linewidth]{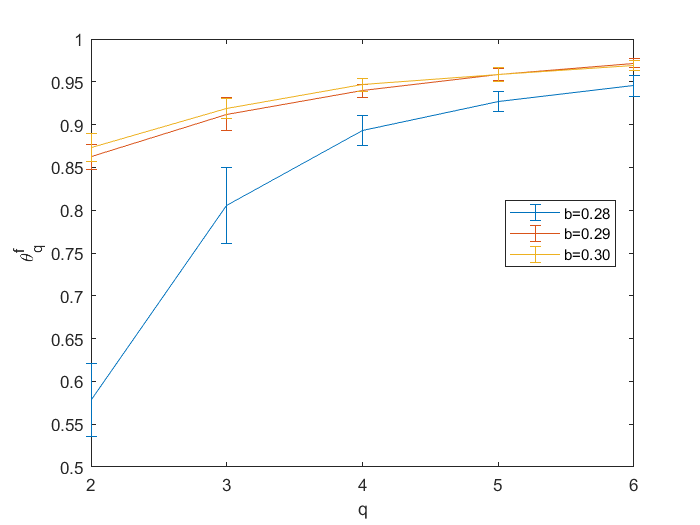}
  \caption{ }
\end{subfigure}%
\begin{subfigure}{.5\textwidth} \label{thetf}
  \centering
  \includegraphics[width=1\linewidth]{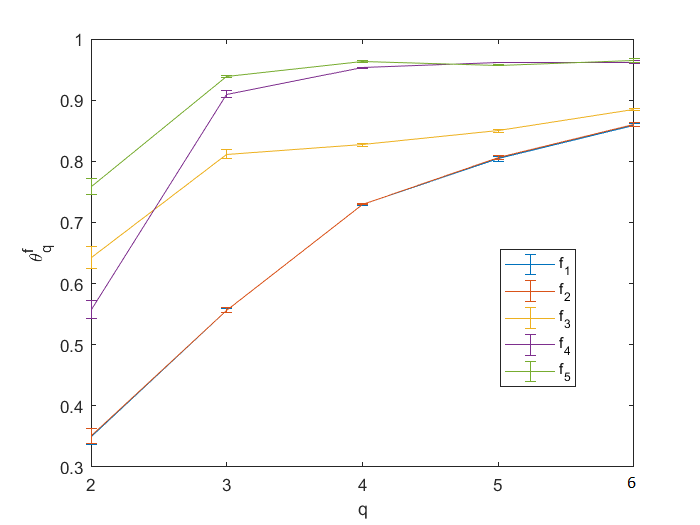}
  \caption{ }
\end{subfigure}
\caption{Left: Estimates for the $\theta^f_q$ spectrum computed for a H\'enon system with different parameters $b$ and for the observation $f(x,y)=\frac{x+y}{2}$. Right: Estimates for the $\theta^f_q$ spectrum computed for the H\'enon system (b=0.3) and different observations : $f_1=Id$, $f_2(x,y)=(100x+y,100y)$, $f_3(x,y)=(x,100y)$, $f_4(x,y)=(x^2,y^2)$, $f_5(x,y)=(\sin(1/x),\cos(1/y))$. For both figures, we used the estimate $\hat{\theta}_5$ introduced in \cite{ei}, with trajectories of length $10^6$ and a threshold value equal to the $0.999$ quantile of the empirical $Y_i$ distribution. The error bars represent the standard deviation of the results over 10 runs.}
\label{thetf}
\end{figure}

%

\section{Application to Sequence Matching}

In this section, we discuss the connection between the present problem and sequence matching problems. Let

\beq
 \left\{
    \begin{array}{ll}
  X^1=X_1^1X_2^1\hdots X_n^1,\\
  \vdots\\
  X^q=X_1^qX_2^q\hdots X_n^q\\
    \end{array}
\right.
\nuq{18.5} 

be $q$ sequences of symbols of length $n$, drawn from the finite alphabet $\mathcal{A}$ with the same probability distribution $\mathbb{P}$. We will denote $\bar{X}_i=(X_i^1,X_i^2,...,X_i^q)$. We suppose that the sequences have a good dependence structure that we will describe later. We are interested in deriving the limit distribution of the length of the longest matching block for the $q$ sequences; the following random variable:
 
\begin{equation}
\Xi_{n,q}(X^1,...,X^q)=\underset{l=0,...,n}\max\{X^1_{i+k}=X^2_{i+k}=...=X^q_{i+k} \text{ for } k=0,...,l \text{ and } 1\le i\le n-l \}.
\end{equation}

To make the connection between the previous sections, let us now consider, as in \cite{short}, the discrete symbolic dynamical system $(\mathcal{A}^{\mathbb{N}},\sigma,\mathbb{P})$, where $\sigma$ is the right-sided shift and $\mathbb{P}$ is the probability measure associated to the process. We consider the symbolic distance in $\mathcal{A}^\mathbb{N}$ defined by:

\begin{equation}\label{distance}
d(x^1,x^2)=\exp(-\inf\{i\ge0,\sigma^ix^1\ne \sigma^ix^2\}).
\end{equation}\\

For our purpose, we take $f=Id$. In this symbolic dynamics, the quantity $D_q$ (if it exists) identifies with a well-known quantity that we now define.

\begin{definition}
We call the R\'enyi entropy of order $q$ of $\mathbb{P}$, the following quantity (if the limit exists):

\begin{equation}\label{renyi}
H_q=\lim_{k\to\infty} \frac{\log\sum_{C_k} \mathbb{P}(C_k)^q}{(1-q)k},
\end{equation}\\

where $C_k(x)=\{y\in\mathcal{A}^{\mathbb{N}}: \sigma^ix=\sigma^iy\text{ for all } 0\le i\le k\}$ is the cylinder of length $k$ containing $x\in \mathcal{A}^{\mathbb{N}}$. 
\end{definition}

To see that $D_q$ identifies with $H_q$ in this context, it is enough to start from definition (\ref{defdq}), take $f=Id$ and use the symbolic distance, allowing to replace balls by cylinders.\\

The Dynamical Extremal Index $\theta_q=\theta_q^{Id}$ becomes in this set up (if it exists, and from equation (\ref{deftheta})):

\begin{equation}\label{thetaq}
\begin{aligned}
\theta_q&=1-p_{0,q}\\
        &=\lim_{k\to\infty} \mathbb{P}(\sigma^{k+1}x^1=\sigma^{k+1}x^2=...=\sigma^{k+1}x^q|\sigma^ix^1=\sigma^ix^2=...=\sigma^ix^q \text{ for } 0\le i \le k).
        \end{aligned}
\end{equation}

Indeed one sees easily that only the $p_{0,q}$ in definition (\ref{deftheta}) is non-zero in this situation (we provide a more detailed argument in the annex).\\

The quantity
\begin{equation}\label{phiq}
\begin{aligned}
Y_i&=-\log(\max_{s=2,\dots,q}d(x_i^1,x_i^s))\\
   &=\underset{j\ge 0}\inf\{\sigma^jx_i^1\ne \sigma^jx_i^s, \text{for some } s=2,...,q\}
   \end{aligned}
\end{equation}

is the length of the longest matching sub-sequence starting from the $i^{th}$ symbol of the different sequences. Now, the quantity

\begin{equation}\label{mn}
M_{n,q}(x^1,...,x^q)= \underset{i=0,n-1}{\max} Y_i
\end{equation}\\

is equal to $$\underset{l\in\mathbb{N}}\max\{x^1_{i+k}=x^2_{i+k}=...=x^q_{i+k} \text{ for } k=1,...,l \text{ and } 0\le i\le n-1 \}.$$ This object is closely related to the quantity $\Xi_{n,q}$ we are interested in. Since we work with different sequences of symbols, and $Y_i$ is a variable defined in the product space, we will state our results with respect to the product measure $\mathbb{P}_q$.  We prove our results under the hypothesis that the process has certain mixing properties, that we now recall.

\begin{definition}
The process $(\mathcal{A}^{\mathbb{N}},\sigma,\mathbb{P})$ is said to be $\alpha-$mixing if there exists $\alpha(n) \to 0$ such that 

\begin{equation}
\underset{A,B\subset \mathcal{A}^{\mathbb{N}}}\sup |\mathbb{P}(A \cap \sigma^{-n} B)- \mathbb{P}(A) \mathbb{P} (\sigma^{-n} B)| \le \alpha(n).
\end{equation}
\end{definition}

One could obtain a distributional result analogue to Proposition 1, by proving that conditions \foreignlanguage{russian}{Д}$_1(u_n)$ and  \foreignlanguage{russian}{Д}'$_1(u_n)$ are satisfied. With this approach, we get the following result, whose detailed proof can be found in the annex:  

\begin{result}
If the sequences are $\alpha$-mixing with $\alpha(n) <\beta e^{-\kappa n}$ for some $\beta\in \mathbb{R}^+$  and some $\kappa>H_q(q-1)$, and the limits defining $\theta_q$ and $H_q$ exist and are different from 0, then

  $$|\mathbb{P}_q(\Xi^q_n\le u_n(s))-\exp(-\theta_q\exp(-s))| \underset{n\to\infty}\to 0,$$ 
  with $u_n(s)=\lfloor\frac{\log n+s}{H_q(q-1)}\rfloor$.
\end{result}

\begin{remark}
We took $f=id$, to ensure a clustering structure that satisfies the different conditions of existence of the limit law, in particular condition \foreignlanguage{russian}{Д}'$_1(u_n)$. We could also consider, as in the first section of the paper, a non-trivial $f$. In the context of sequence matching, $f$ is called the encoding function (or encoder) and can model different treatments of the original source of information \cite{encoded}. The clustering structure is however in this case too complex to yield such a general result.
\end{remark}

It is in fact possible to obtain a more general result than Result 1, under much weaker conditions. The latter is based on results by Abadi and Saussol concerning the Hitting Time Statistics of symbolic dynamics in cylinders \cite{absa}. This idea originates from a discussion with J\'er\^ome Rousseau to whom the author is thankful.

\begin{theorem}
If $\mathbb{P}$ is $\alpha-$ mixing, and if the limits defining $\theta_q$ and $H_q$ exist and are different from 0, then

  $$|\mathbb{P}_q(\Xi_{n,q}\le u_n(s))-\exp(-\theta_q\exp(-s))| \underset{n\to\infty}\to 0,$$ 
  with $u_n=u_n(s)=\lfloor\frac{\log n+s}{H_q(q-1)}\rfloor$.
\end{theorem}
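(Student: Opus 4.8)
The plan is to recast $\{\Xi_{n,q}\le u_n\}$ as a hitting-time event in the product symbolic system and then feed it to the Abadi--Saussol hitting-time statistics, whose advantage is that they demand no quantitative mixing rate, only $\alpha(n)\to0$. First I would identify $(\mathcal{A}^{\mathbb{N}})^q$ with $(\mathcal{A}^q)^{\mathbb{N}}$ via $(x^1,\dots,x^q)\mapsto(\bar X_0,\bar X_1,\dots)$, so that the product shift becomes the shift on the alphabet $\mathcal{A}^q$ and $\mathbb{P}_q$ a stationary measure on it. Setting the agreement indicator $Z_i=\mathbf{1}\{X_i^1=\cdots=X_i^q\}$, an aligned common block of length $l$ starting at $i$ is exactly a run $Z_i=\cdots=Z_{i+l-1}=1$, so $\Xi_{n,q}$ is, up to boundary conventions, one less than the longest run of $1$'s of $(Z_i)$ in a window of length $n$. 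Hence $\{\Xi_{n,q}\le u_n\}$ coincides, up to a negligible set of endpoints, with $\{\tau_{E_n}>n\}$, where $E_n=\{Z_0=\cdots=Z_{u_n}=1\}$ is the diagonal run set, $\mathbb{P}_q(E_n)=\sum_{C_{u_n+1}}\mathbb{P}(C_{u_n+1})^q$, and the defining limit of $H_q$ in (\ref{renyi}) gives $\mathbb{P}_q(E_n)=e^{-(q-1)H_q u_n(1+o(1))}$. The floor in $u_n$ is tuned precisely so that, after this scaling, $n\,\mathbb{P}_q(E_n)\to e^{-s}$.

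Next I would check that the product process satisfies the Abadi--Saussol hypotheses. Because the $q$ strings are independent and each is $\alpha$-mixing under $\mathbb{P}$, the product measure $\mathbb{P}_q$ is $\alpha$-mixing with a coefficient bounded by a constant multiple of $\alpha(n)$, which still tends to $0$; no rate is required. I would then apply the Abadi--Saussol hitting-time theorem to the nested rare sets $E_n$ in the rescaled form $\mathbb{P}_q(\tau_{E_n}>t/\mathbb{P}_q(E_n))\to e^{-\theta t}$ and substitute $t=n\,\mathbb{P}_q(E_n)\to e^{-s}$, obtaining $\mathbb{P}_q(\tau_{E_n}>n)\to\exp(-\theta e^{-s})$; here the exponential shape comes from long-range decorrelation while the constant $\theta$ encodes the short returns to $E_n$.

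It then remains to show $\theta=\theta_q$. In the symbolic $f=\mathrm{Id}$ situation the only surviving short-return contribution is the immediate continuation of a run: being in $E_n$ at time $0$ and again at time $k+1$ while avoiding it in between forces $Z_{u_n}=0$, which is incompatible with re-entry for every $1\le k\le u_n-1$, so the analogues of the terms $p_{k,q}$ of (\ref{pk}) vanish for $k\ge1$ (this is spelled out in the annex). Consequently $\theta=1-p_{0,q}$ with $p_{0,q}=\lim_k\mathbb{P}_q(Z_k=1\mid Z_0=\cdots=Z_{k-1}=1)$, which is exactly $\theta_q$ as given by (\ref{deftheta})--(\ref{p0}). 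A final comparison, absorbing the difference between $\tau_{E_n}>n$ and $\Xi_{n,q}\le u_n$ (the restricted range of starting indices and the integer offsets coming from the run-length and floor conventions) into an $o(1)$ correction of $n\,\mathbb{P}_q(E_n)$, yields the statement.

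The main obstacle lies in the application of Abadi--Saussol to $E_n$, for $E_n$ is not a single cylinder but the diagonal union of the $|\mathcal{A}|^{u_n+1}$ cylinders carrying highly unequal masses. One must verify that their exponential-law estimates, and the associated control of short and long returns, survive this growing union, and that the heterogeneity of the cylinder masses is correctly absorbed into the single scale $e^{-(q-1)H_q u_n}$ -- this is precisely where $H_q$ enters as an average over cylinders. Coupled to this is the need for enough regularity of the limit defining $H_q$ (an $e^{\pm o(\log n)}$ slack in $\mathbb{P}_q(E_n)$ would already spoil the clean constant $e^{-s}$), together with the verification that the blocking construction's cluster correction is exactly $\theta_q$. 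Everything else -- the recoding, the product mixing bound, and the boundary bookkeeping -- is routine.
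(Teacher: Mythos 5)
Your overall strategy coincides with the paper's: recode the $q$ strings through the agreement indicators $Z_i=\mathbf{1}\{X_i^1=\cdots=X_i^q\}$, turn $\{\Xi_{n,q}\le u_n\}$ into a hitting-time event for a rare set of measure asymptotic to $e^{-(q-1)H_q u_n}$, invoke the Abadi--Saussol hitting-time statistics for $\alpha$-mixing processes (which indeed require no mixing rate), identify the clustering factor with $\theta_q$, and finally absorb the discrepancy between $\Xi_{n,q}$ and the maximum $M_{n,q}$ into an $o(1)$ correction. All of those steps appear in the paper's proof in essentially the form you describe.

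The problem is the obstacle you flag at the end: in the product system $(\mathcal{A}^q)^{\mathbb{N}}$ your target $E_n$ is a union of $|\mathcal{A}|^{u_n+1}$ cylinders of highly unequal masses, whereas the Abadi--Saussol theorem is stated for hitting times to a single cylinder. This is not a routine verification to be deferred; it is the one genuine difficulty of the proof, and you leave it open. The paper's resolution is to avoid the product system entirely: it pushes the problem forward along $(x^1,\dots,x^q)\mapsto(Z_0,Z_1,\dots)$ to the binary shift $(\{0,1\}^{\mathbb{N}},\tilde{\mathbb{P}},\sigma)$, where the diagonal run set becomes the \emph{single} cylinder $I_{u_n}$ of sequences whose first $u_n$ symbols equal $1$, so that Theorem 1 of Abadi--Saussol applies verbatim (their Hypothesis 1 being covered by their Example 2). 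The factor measure $\tilde{\mathbb{P}}$ inherits $\alpha$-mixing because the $Z_i$ are coordinatewise functions of the $\bar X_i$ (Theorem 5.1 of Bradley's survey), and the clustering factor $\lambda(I_{u_n})=1-\lim_k \tilde{\mathbb{P}}(I_{k+1})/\tilde{\mathbb{P}}(I_k)$ is identified with $\theta_q$ via the companion Abadi--Saussol result on almost sure convergence of the clustering factor, rather than by your direct vanishing argument for the short-return terms. Without this factor construction --- or, alternatively, an extension of the hitting-time statistics to growing unions of cylinders, which you would then have to prove --- your argument does not close.
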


\begin{proof}
 Let us consider the process $(Z_i)$ defined by 
 
\beq
Z_i=
 \left\{
    \begin{array}{ll}
  1 \text{  if    } X_i^1=X_i^2=...=X_i^q,\\
  0 \text{    otherwise.}\\
    \end{array}
\right.
\nuq{17.5} 
 
The problem of finding the largest common substring to $X^1,...,X^q$ is now equivalent to find the longest succession of ones in the process $(Z_i)$. Let us consider the dynamical system $(\mathcal{B},\tilde{\mathbb{P}},\sigma)$, where $\mathcal{B}=\{0,1\}^{\mathbb{N}}$, $z$ a point in $\mathcal{B}$ and $\tilde{\mathbb{P}}$ the probability measure defined by

\begin{equation}
\begin{aligned}
\tilde{\mathbb{P}}(z_i=1)&=\mathbb{P}_q(x^1_i=...=x^q_i)\\
                         &=\sum_{a\in\mathcal{A}} \mathbb{P}(x^1_i=a)^q.
                         \end{aligned}
\end{equation}

Let us denote $I_k$ the cylinder constituted of all sequences having their first $k$ symbols equal to 1, and denote $$\tau_{I_k}(z) =\inf\{j \ge 1: \sigma^jz\in I_k\},$$

the first hitting time of the point $z$ in the set $I_k$. We notice that 

\begin{equation}\label{tt2}
\mathbb{P}_q(M_{n,q}<u_n)=\tilde{\mathbb{P}}(\tau_{I_{u_n}}>n).
\end{equation}

Since $\mathbb{P}$ is $\alpha-mixing$, so is $\tilde{\mathbb{P}}$, by theorem 5.1 in \cite{mixreview}. We are then in the set up of Theorem 1 in \cite{absa}. In particular, Hypothesis 1 of this theorem is satisfied, from Example 2 in \cite{absa}. Therefore:

\begin{equation}\label{tt1}
\sup_{t\in \mathbb{R}^+}|\tilde{\mathbb{P}}(\lambda(I_{u_n})\tilde{\mathbb{P}}(I_{u_n})\tau_{I_{u_n}} > t)-\exp(-t)| \underset{n\to\infty}\to 0,
\end{equation}

where, from \cite{absatheta}:

\begin{equation}
\begin{aligned}
\lambda(I_{u_n}) &=1-\lim_{k\to\infty} \frac{\tilde{\mathbb{P}}(I_{k+1})}{\tilde{\mathbb{P}}(I_k)}\\
                 &=1-\lim_{k\to\infty} \frac{\sum_{C_{k+1}}\mathbb{P}(C_{k+1})^q}{\sum_{C_k}\mathbb{P}(C_k)^q}\\
                 &= \theta_q.
\end{aligned}
\end{equation}

Notice now that we have from equation (\ref{renyi}):

\begin{equation}
\begin{aligned}
H_q&=\lim_{k\to\infty} \frac1{(1-q)k}{\log \underset{C_k}\sum \mathbb{P}(C_k)^q}\\
   &=\lim_{k\to\infty} \frac{\log \tilde{\mathbb{P}}(I_k)}{(1-q)k},
   \end{aligned}
\end{equation}

so that 

\begin{equation}
\tilde{\mathbb{P}}(I_{u_n})\underset{n\to\infty}\sim e^{-(q-1)H_qu_n}.
\end{equation}

 If we put $t=e^{-s}$, equation (\ref{tt1}) writes, after rearranging a bit:

\begin{equation}
\sup_{s\in \mathbb{R}} |\tilde{\mathbb{P}}(\tau_{I_{u_n}} > e^{-s+(q-1)H_qu_n})-\exp(-\theta_q e^{-s})| \underset{n\to\infty}\to 0.
\end{equation}

keeping in mind that $u_n=\lfloor\frac{\log n+s}{H_q(q-1)}\rfloor$, we get

\begin{equation}
\sup_{s\in \mathbb{R}} |\tilde{\mathbb{P}}(\tau_{I_{u_n}} > n)-\exp(-\theta_q e^{-s})| \underset{n\to\infty}\to 0.
\end{equation}

Using now equation (\ref{tt2}), we obtain that for all $s \in \mathbb{R}$:

\begin{equation}\label{evl}
 |\mathbb{P}_q(M_{n,q} > u_n)-\exp(-\theta_q e^{-s})| \underset{n\to\infty}\to 0.
\end{equation}

Now that we have a distributional result for the variable $M_{n,q}$, we can get one for $\Xi_{n,q}$, which is a slightly different object. In fact we have that

\begin{equation}\label{kkk}
\begin{split}
\mathbb{P}_q(\Xi_{n,q}\le u_n)&=\mathbb{P}_q(\Xi_{n,q}\le u_n \cap M_{u_n,q}(\sigma^{n-u_n}x^1,...,\sigma^{n-u_n}x^q)\le u_n)\\
                                                                &+\mathbb{P}_q(\Xi_{n,q}\le u_n \cap M_{u_n,q}(\sigma^{n-u_n}x^1,...,\sigma^{n-u_n}x^q) > u_n).
\end{split}
\end{equation}

The second term is bounded above by the term $\mathbb{P}_q(M_{u_n,q}(\sigma^{n-u_n}x^1,...,\sigma^{n-u_n}x^q)> u_n)$, which, by invariance of the measure by $\sigma$ equals $\mathbb{P}_q(M_{u_n,q}(x^1,...x^n)> u_n)$, which is clearly vanishing to 0 as $n\to \infty$, from (\ref{evl}).\\

The first term in (\ref{kkk}) is exactly equal to $\mathbb{P}_q(M_{n,q}(\overline{x})\le u_n)$. Therefore:

\begin{equation}\label{rr}
|\mathbb{P}_q(\Xi_{n,q}\le u_n)-\mathbb{P}_q(M_{n,q}\le u_n)| \underset{n\to\infty}\to 0.
\end{equation}

We have that for all $s\in \mathbb{R}$:

\begin{equation}
|\mathbb{P}_q(\Xi_{n,q}\le u_n)-\exp(-\theta_qe^{-s})| \le |\mathbb{P}_q(\Xi_{n,q}\le u_n)-\mathbb{P}_q(M_{n,q}\le u_n)|+ |\mathbb{P}_q(M_{n,q} \le  u_n) -\exp(-\theta_qe^{-s})|,
 \end{equation}
which, by relations (\ref{evl}) and (\ref{rr}) goes to 0.
\end{proof}


\section{Acknowledgement}
The author was partially supported by CMUP, which is financed by national funds through FCT – Fundação para a
Ciência e Tecnologia, I.P., under the project with reference UIDB/00144/2020.
The author would like to thank Jorge M. Freitas, J\'er\^ome Rousseau, Beno\^it Saussol and Sandro Vaienti for the fruitful discussions we had concerning this work and the anonymous referee for its constructive comments.

\section{Annex (proof of Result 1, via EVT)}

We first show that both conditions \foreignlanguage{russian}{Д}$_1(u_n)$ and  \foreignlanguage{russian}{Д}'$_1(u_n)$ are satisfied, so we have an EVL for the random variable $M_{n,q}$. Then we show that $\Xi_{n,q}$ and $M_{n,q}$ have the same asymptotic distribution. Let us first take care of condition  \foreignlanguage{russian}{Д}$'_1(u_n)$.\\
We observe that if $Y_0=k\in \mathbb{N}^*$, then  $Y_j=k-j$ for $1\le j\le k$. Therefore, if $Y_0>u_n$, then $Y_1>Y_j>u_n-j$ for $2\le j<u_n$, so that all the probabilities in the sum in point 3 of definition 4 are 0 for $2\le j<u_n$, that is

\begin{equation}\label{11}
\underset{n\to\infty}\lim n\sum_{j=2}^{u_n-1}\mathbb{P}_q(Y_0>u_n\cap Y_1\le u_n\cap Y_{j}>u_n)=0.
\end{equation}

Let $0<\varepsilon_2<\varepsilon_1<1$ and $C_1=1-\varepsilon_1$. We define $r_n=\lfloor C_1u_n\rfloor$ and $\lambda_n=\lfloor n^{\varepsilon_2}\rfloor$. We take $j$ such that $u_n\le j \le \lambda_n$. We observe that 
$\{Y_j>u_n\} \subset \{Y_{j+r_n}>u_n-r_n\}$, so that 

\begin{equation}\label{h}
 \mathbb{P}_q(Y_0>u_n\cap Y_1\le u_n\cap Y_{j}>u_n) \le  \mathbb{P}_q(Y_0>u_n\cap Y_1\le u_n\cap Y_{j+r_n}>u_n-r_n).
\end{equation}                                                                          

Notice that $\{Y_0>u_n\cap Y_1\le u_n\}=\{Y_0=u_n+1\}$, and this event depends only on the realizations of $\bar{X}_1,...,\bar{X}_{u_n+2}$, whereas $\{Y_{j+r_n}>u_n-r_n\}$ depends only on the realizations of $\bar{X}_{j+r_n},\bar{X}_{j+r_n+1},...$, which puts a gap of length $j+r_n-u_n-2$. We now use the fact that the sequences are $\alpha-$mixing, which implies that the $q-$fold Cartesian product of the sequences is $(\alpha_q)$-mixing, with $\alpha_q(n)\le q\alpha(n)$ (see theorem 5.1 in \cite{mixreview}). We have 

\begin{equation}
\begin{aligned}
\mathbb{P}_q(Y_0>u_n\cap Y_1\le u_n\cap Y_{j}>u_n) &\le \alpha_q(j+r_n-u_n-2)+\mathbb{P}_q(Y_0>u_n\cap Y_1 \le u_n)\mathbb{P}_q(Y_{j+r_n}>u_n-r_n)\\
                                                                              &\le q\alpha(j+r_n-u_n-2)+ \mathbb{P}_q(Y_0>u_n)\mathbb{P}_q(Y_{j+r_n}>u_n-r_n)\\
                                                                              &\le q\beta e^{-\kappa(j+r_n-u_n-2)} + \mathbb{P}_q(Y_0>u_n)\mathbb{P}_q(Y_0>u_n-r_n).
\end{aligned}
\end{equation}
                                                                                   
To get the last inequality, we used the invariance of the measure. Notice that $j\ge u_n$, so that $e^{-\kappa(j+r_n-u_n-2)}\le e^{-\kappa(r_n-2)}$.
We also have from relation (\ref{2}) that  $\mathbb{P}_q(Y_0>u_n) \sim e^{-u_n\tau_q}$, where $\tau_q=H_q(q-1)$, so that there exists $C_2>1$ such that $$\mathbb{P}_q(Y_0>u_n)<C_2e^{-u_n\tau_q}.$$

We then have:

\begin{equation}
\mathbb{P}_q(Y_0>u_n\cap Y_1\le u_n\cap Y_{j}>u_n) \le q\beta e^{-\kappa(r_n-2)}+C_2^2e^{-(2u_n-r_n)\tau_q}.
\end{equation}

Then we can write 
\begin{equation}\label{ll}
\begin{aligned}
n\sum_{j=u_n}^{\lambda_n}\mathbb{P}_q(Y_0>u_n\cap Y_1\le u_n\cap Y_{j}>u_n) &\le \sum_{j=u_n}^{\lambda_n} [n q\beta e^{-\kappa(r_n-2)}+nC_2^2e^{-(2u_n-r_n)\tau_q}]\\
                                                                                                                                                    &\le (\lambda_n -u_n)n q\beta e^{-\kappa (r_n-2)}+(\lambda_n-u_n)nC_2^2e^{-(2u_n-r_n)\tau_q}\\
                                                                                                                                                    &\le \lambda_n n q\beta e^{-\kappa (r_n-2)}+\lambda_n nC_2^2e^{-(2u_n-r_n)\tau_q}\\
                                                                                                                                                    &\le (q\beta e^{2\kappa})n\lambda_n e^{-\kappa r_n}+C_2^2n\lambda_ne^{-2(u_n-r_n)\tau_q}\\
                                                                                                                                                    &\le (q\beta e^{2\kappa})n\lambda_n e^{-\kappa \lfloor C_1 u_n \rfloor}+C_2^2n\lambda_ne^{-2(u_n-\lfloor C_1 u_n \rfloor)\tau_q}\\
                                                                                                                                                     &\le (q\beta e^{2\kappa})n\lambda_n e^{-\kappa (C_1 u_n-1)}+C_2^2n\lambda_ne^{-(2-C_1)u_n \tau_q}\\
                                                                                                                                                     &\le (q\beta e^{3\kappa})n\lambda_ne^{-\kappa C_1 \lfloor \frac{\log n+s}{\tau_q}\rfloor}+C_2^2n\lambda_ne^{-(2-C_1)\lfloor \frac{\log n+s}{\tau_q}\rfloor \tau_q}\\
                                                                                                                                                     &\le (q\beta e^{3\kappa})n\lambda_n e^{-\kappa C_1 ( \frac{\log n+s}{\tau_q}-1)}+C_2^2n\lambda_ne^{-(2-C_1)( \frac{\log n+s}{\tau_q}-1) \tau_q}\\                                                                                                                                                    
                                                                                                                                                     &\le C_3n\lambda_n e^{-\kappa C_1 \frac{\log n}{\tau_q}}+C_4n\lambda_n e^{-(2-C_1)\log n},\\                                                                                                                                                    
\end{aligned}
\end{equation}

with

$$C_3=q\beta e^{3\kappa}e^{-\kappa C_1(\frac{s}{\tau_q}-1)}$$

and 

$$C_4=C_2^2e^{(C_1-2)(s-\tau_q)}.$$

For the first term, we have 

\begin{equation}\label{78}
\begin{aligned}
C_3n\lambda_n e^{-\kappa C_1 \frac{\log n}{\tau_q}}&=C_3n\lfloor n^{\varepsilon_2}\rfloor e^{-\kappa C_1 \frac{\log n}{\tau_q}}\\
                                                                                     & \le C_3n n^{\varepsilon_2} e^{-\kappa C_1 \frac{\log n}{\tau_q}}\\
                                                                                     & \le C_3 n^{1+\varepsilon_2- \frac{\kappa C_1}{\tau_q}}\\
                                                                                     &\le C_3 n^{1+\varepsilon_2- \frac{\kappa (1-\varepsilon_1)}{\tau_q}}.\\
\end{aligned}
\end{equation}

Since $\kappa > \tau_q$, we can always chose $\varepsilon_1,\varepsilon_2$ and $\varepsilon_3 >0$ such that 

\begin{equation}
\epsilon_3>\frac{(\varepsilon_1+\varepsilon_2)\tau_q}{1-\varepsilon_1}
\end{equation}

and

\begin{equation}
\kappa >\tau_q+\varepsilon_3.
\end{equation}
 
We then have 

\begin{equation}
\begin{aligned}
1+\varepsilon_2- \frac{\kappa (1-\varepsilon_1)}{\tau_q}&<\varepsilon_1+\varepsilon_2-\frac{\varepsilon_3(1-\varepsilon_1)}{\tau_q} <0.
\end{aligned}
\end{equation}

and so by relation (\ref{78}):

\begin{equation}\label{44}
C_3n\lambda_n e^{-\kappa C_1 \frac{\log n}{\tau_q}}  \underset{n\to\infty}\to 0.
\end{equation}

Let us now come to the second term in relation (\ref{ll}):

\begin{equation}
\begin{aligned}
C_4n\lambda_n e^{-(2-C_1)\log n}&=C_4n\lfloor n^{\varepsilon_2} \rfloor e^{-(1+\varepsilon_1)\log n}\\
                                                   &\le C_4 n^{\varepsilon_2-\varepsilon_1}.\\
\end{aligned}
\end{equation}

And since $\varepsilon_1>\epsilon_2$:

\begin{equation}\label{45}
C_4n\lambda_n e^{-(2-C_1)\log n}  \underset{n\to\infty}\to 0.
\end{equation}

Combining relations (\ref{44}), (\ref{45}) and (\ref{ll}), we have that:

\begin{equation}\label{22}
\underset{n\to\infty}\lim n\sum_{j=u_n}^{\lambda_n}\mathbb{P}_q(Y_0>u_n\cap Y_1\le u_n\cap Y_{j}>u_n)=0.
\end{equation}

Combining equation (\ref{11}) and (\ref{22}), we get that 

\begin{equation}\label{23}
\underset{n\to\infty}\lim n\sum_{j=2}^{\lambda_n}\mathbb{P}_q(Y_0>u_n\cap Y_1\le u_n\cap Y_{j}>u_n)=0.
\end{equation}

Taking $k_n=\frac{n}{\lambda_n}=n^{1-\varepsilon_2}$, we have that points 1 and 3 of condition \foreignlanguage{russian}{Д}$_1'(u_n)$ are satisfied. To satisfy point 2, we take $t_n=\lfloor n^{\varepsilon_4}\rfloor$, with $\varepsilon_4<\varepsilon_2$.  \foreignlanguage{russian}{Д}$_1'(u_n)$ is then satisfied.\\

Let us now come to condition  \foreignlanguage{russian}{Д}$_1(u_n)$. Define the event $\Omega_n = \{Y_0<\lfloor t_n/2 \rfloor\}$. We have that

\begin{equation}\label{jjjj}
\begin{aligned}
 |\mathbb{P}_q(A_n\cap B_{t_n,l,n})-\mathbb{P}_q(A_n)\mathbb{P}_q(B_{0,l,n})| &\le  |\mathbb{P}_q(A_n\cap \Omega_n\cap B_{t_n,l,n})-\mathbb{P}_q(A_n\cap\Omega_n)\mathbb{P}_q(B_{0,l,n})|\\
                                                                                                                                                                                              &+|\mathbb{P}_q(A_n\cap \Omega_n^c\cap B_{t_n,l,n})-\mathbb{P}_q(A_n\cap\Omega_n^c)\mathbb{P}_q(B_{0,l,n})|.
 \end{aligned}
 \end{equation}

We have just introduced a gap of size $t_n-\lfloor t_n/2\rfloor-1$ in the first term of the right hand side of the previous inequation. Indeed, for $n$ large enough, the event $A_n\cap \Omega_n$ depends only on the realizations of $\bar{X}_1,\bar{X}_2,...,\bar{X}_{\lfloor t_n/2\rfloor}$, while $B_{t_n,l,n}$ depends on the realizations of $\bar{X}_{t_n},...,\bar{X}_{t_n+l}$. We can then bound this term, using again theorem 5.1 in \cite{mixreview} :

\begin{equation}\label{hulk1}
\begin{aligned}
 |\mathbb{P}_q(A_n\cap \Omega_n\cap B_{t_n,l,n})-\mathbb{P}_q(A_n\cap\Omega_n)\mathbb{P}(B_{0,l,n})|&\le q\alpha(t_n-\lfloor t_n/2\rfloor-1)\\
                                                                                                                                                         &\le q\beta e^{-\kappa(t_n-\lfloor t_n/2\rfloor-1)}\\
                                                                                                                                                         &\le q\beta e^{-\kappa(t_n/2-1)}.\\
\end{aligned}
\end{equation}
 
 For the second term, we can write
 
 \begin{equation}\label{hulk2}
 \begin{aligned}
 |\mathbb{P}_q(A_n\cap \Omega_n^c\cap B_{t,l,n})-\mathbb{P}_q(A_n\cap\Omega_n^c)\mathbb{P}_q(B_{0,l,n})|& \le |\mathbb{P}_q(A_n\cap \Omega_n^c\cap B_{t,l,n})|+|\mathbb{P}_q(A_n\cap\Omega_n^c)\mathbb{P}_q(B_{0,l,n})|\\
                                                                                                                                                             & \le 2\mathbb{P}_q(\Omega_n^c) \sim 2e^{-\lfloor t_n/2 \rfloor \tau_q}\\
                                                                                                                                                             &\le C_5e^{-\lfloor t_n/2 \rfloor\tau_q}\\
                                                                                                                                                             &\le  C_5e^{- (t_n/2-1) \tau_q},\\
\end{aligned}
\end{equation}
 for some $C_5>2$.\\
 
 Let us now take $$\gamma(n,t_n)=q\beta e^{-\kappa(t_n/2-1)}+C_5e^{-(t_n/2-1) \tau_q}.$$
 
 Combining expressions (\ref{jjjj}), (\ref{hulk1}), (\ref{hulk2}), we get
 
 \begin{equation}\label{ffcond2}
 |\mathbb{P}_q(A_n\cap B_{t,l,n})-\mathbb{P}_q(A_n)\mathbb{P}_q(B_{0,l,n})| \le \gamma(n,t_n).
 \end{equation}

  Let us recall that from condition \foreignlanguage{russian}{Д}$_1'(u_n)$, $t_n=\lfloor n^{\varepsilon_4}\rfloor=o(n).$ $\gamma$ is clearly decreasing and we check easily that $$n\gamma(n,t_n) \underset{n\to\infty}\to 0.$$
  
  Condition \foreignlanguage{russian}{Д}$_1(u_n)$ is then satisfied. We can now apply corollary 4.1.7 in \cite{book} to get that
  
 \begin{equation}
\mathbb{P}_q(M_{n,q} \le u_n) -\exp(-\theta_q\exp(-s)) \underset{n\to\infty}\to 0.
\end{equation}

We conclude by using the same arguments as in the proof of Theorem 1, showing that $M_{n,q}$ and $\Xi_{n,q}$ have the same limit distribution.

\end{document}